\newcommand{\blind}{0}
\def\E{\mathbb{E}}
\newcommand{\convD}{\xrightarrow[]{\mathcal{L}}}
\newcommand{\convP}{\xrightarrow[]{\mathcal{P}}}
\newcommand{\sgn}{\mathrm{sgn}}
\newcommand{\var}{\mathrm{Var}}
\newcommand{\cov}{\mathrm{Cov}}
\newcommand{\ber}{\mathrm{Bernoulli}}
\newcommand{\bin}{\mathrm{Binomial}}
\newcommand{\expdist}{\mathrm{Exponential}}
\newcommand{\betadist}{\mathrm{Beta}}
\newcommand{\ind}{\mathbb{I}}
\newcommand{\gauss}{\mathcal{N}}
\renewcommand{\P}{\mathbb{P}}
\renewcommand{\vec}[1]{\mathbf{#1}}
\newcommand{\RNum}[1]{\uppercase\expandafter{\romannumeral #1\relax}}
\newcommand{\abs}[1]{\left\lvert #1 \right\rvert}
\newcommand{\paren}[1]{\mathopen{}\left( {#1}_{{}_{}}\,\negthickspace\right)\mathclose{}}
\newcommand{\bracket}[1]{\mathopen{}\left[ {#1}_{{}_{}}\,\negthickspace\right]\mathclose{}}
\numberwithin{equation}{section}  
\newtheoremstyle{general}
{3mm} 
{3mm} 
{\it} 
{} 
{\bfseries} 
{.} 
{.5em} 
{} 
\theoremstyle{general}
\newtheorem{lemma}{Lemma}
\newtheorem{theorem}{Theorem}
\newtheorem{corollary}{Corollary}
\newtheorem{assumption}{Assumption}
\newtheorem{remark}{Remark}
\newtheorem{prop}{Proposition}
\renewenvironment{proof}[1][\proofname]{\par
    \pushQED{\qed}%
    \normalfont \topsep6\p@\@plus6\p@\relax
    \trivlist
    \item\relax{
        \bfseries
        #1\@addpunct{.}}\hspace\labelsep\ignorespaces
    }{%
     \popQED\endtrivlist\@endpefalse
     }
\begin{document}


\if0\blind
{
  \title{\bf Effect of influence in voter models and its  application in detecting significant interference in political elections}
  \author{Manit Paul\thanks{Email: manit282000@gmail.com}\hspace{.2cm} \\ 
  Indian Statistical Institute Kolkata \\ 203 Barrackpore Trunk Road, Kolkata, WB 700108, India. \\
  and \\
  Rishideep Roy\thanks{Email: rishideep.roy@iimb.ac.in}\hspace{.2cm} \\
  Indian Institute of Management Bangalore \\ Bannerghatta Main Road, Billekahalli, Bangalore, KA 560076, India. \\
  and \\
  Soudeep Deb\thanks{Email: soudeep@iimb.ac.in}\hspace{.2cm} \\
  Indian Institute of Management Bangalore \\ Bannerghatta Main Road, Billekahalli, Bangalore, KA 560076, India.}
  \maketitle
} \fi

\if1\blind
{
  \bigskip
  \bigskip
  \bigskip
  \begin{center}
    {\LARGE\bf Effect of influence in voter models and its application in detecting significant interference in political elections}
\end{center}
  \medskip
} \fi

\bigskip
\begin{abstract}
In this article, we study the effect of vector valued interventions in votes under a binary voter model, where each voter expresses their vote as a $0-1$ valued random variable to choose between two candidates. We assume that the outcome is determined by the majority function, which is true for a democratic system. The term intervention includes the cases of counting errors, reporting irregularities, electoral malpractice etc. Our focus is to analyze the effect of the intervention on the final outcome. We construct statistical tests to detect significant irregularities in elections under two scenarios, one where an exit poll data is available, and more broadly under the assumption of a cost function associated with causing the interventions. Relevant theoretical results on the consistency of the test procedures are also derived. Through detailed simulation study we show that the test procedure has good power and is robust across various settings. We also implement our method on three real life data sets. The applications provide results consistent with existing knowledge and establish that the method can be adopted for crucial problems related to political elections.


\end{abstract}

\noindent%
{\it Keywords: Detecting Intervention, Presidential election, Consistent test of hypothesis, Voting irregularities}.
\vfill





\newpage

\section{Introduction} 
\label{sec:introduction}

In the modern era, opinions of individuals hold great power over decision-making at multiple levels, going up to the functioning of the government and the society. As such, the cumulative or interactive behavior within those opinions hold sway over these power structures. 
Opinion dynamics focuses on the way different options compete in a population, giving rise to either consensus (every individual holding the same opinion or option) or coexistence of several opinions. 
Our focus is particularly on the electoral system, and the effects of influence { (we shall use the term intervention interchangeably)} in them. The voter model has been studied extensively as an opinion dynamics model in this regard. We shall use the voter model in this paper for a two-party democratic system, which is prevalent in many countries, e.g.\ United States of America (USA). Our objective is to understand the change in the outcome of the vote when there is  intentional or unintentional external influence, such as counting errors, vote rigging, reporting issues etc. We shall primarily construct statistical hypothesis tests, based on exit poll data, to identify whether significant such interventions have happened. We also provide a cost function based approach to detect the same, under the situation when exit poll or opinion poll data is not available.

\subsection{Background and relevant literature}

Government forms a very important part of our society. There have been various forms of government at different times in our history. In ancient times, monarchy prevailed in our society. After that, came the era of autocracy, in which supreme power over a state is concentrated in the hands of one person, whose decisions are subject to neither external legal restraints nor regularized mechanisms of popular control. Thereafter, dictatorship showed its ugly presence in the society where the government was characterized by a single leader or group of leaders and little or no toleration for political pluralism or independent media. The evils of dictatorship was badly experienced by the society then, which finally paved the way for democracy. According to the Oxford dictionary, democracy is ``government by the people in which the supreme power is vested in the people and exercised directly by them or by their elected agents under a free electoral system''. In the phrase of Abraham Lincoln as mentioned in his biography, democracy is a government ``of the people, by the people, and for the people''. Our focus on this article is on an immensely crucial aspect of the democratic system. 

In most of the democratic countries across the globe today, the government is elected based on which candidate has received the maximum number or the \textit{majority} of votes. Majority rule is the binary decision rule used most often in influential decision-making bodies, including all the legislatures of democratic nations. Several works have been done on majority rule, see for example, \cite{messner2004voting}, \cite{hastie2005robust} and \cite{dasgupta2008robustness}. According to \cite{may1952set}, majority rule is the only binary decision rule that has the following properties: fairness in terms of anonymity and neutrality, decisiveness and monotonicity. Other forms of binary decision rules do not satisfy the above. For example, another commonly discussed decision function is the dictator function, which works under the assumption that the election result is completely determined by the choice of one person, known as the `dictator'. This rule does not satisfy the aforementioned properties. Refer to \cite{aswal2003dictatorial}, \cite{chatterji2014random} and \cite{pichler2018dictator} for relevant reading on dictator functions. 

An important issue that comes with the democratic election system and the majority rule is the problem of electoral irregularities, i.e.\ intentional or unintentional interference with the process of an election, subsequently increasing the vote share of a particular candidate, depressing the vote share of rival candidates, or both. In \cite{gibbard1973manipulation} and \cite{satterthwaite1975strategy}, it has been discussed how it is impossible to ensure that electoral systems are completely devoid of manipulations. In fact, the occurrences of electoral frauds is not at all uncommon in our society. For instance, in Georgia, an election server central to a legal battle over the integrity of Georgia elections was left exposed to the open internet for at least six months (\cite{Bajak20}).
There are several other instances of electoral fraud in other countries in modern times, see for example, \cite{mccann1998mexicans}, \cite{lehoucq2003electoral}, \cite{paniotto2004ukraine}, \cite{prado20112004}, \cite{casimir2013electoral}, and \cite{onapajo2014rigging}.

In light of the above, it becomes extremely important to come up with methods of identifying the occurrence of electoral anomalies. One of the most popular approaches in this regard is the Benford law. The works by \cite{deckert2011benford}, \cite{pericchi2011quick} and \cite{beber2012numbers} are some relevant studies. \cite{kobak2016integer} also hypothesized that if election results are manipulated or forged, then, due to the well-known human attraction to round numbers, the frequency of reported round percentages can be increased. This hypothesis was tested by analyzing raw data from seven federal elections held in the Russian Federation during the period from 2000 to 2012. \cite{rozenas2017detecting}, on the other hand, used a technique based on resampled kernel density methods to detect irregularities in the pattern of vote-shares. During the last decade, machine learning techniques have also been used for detecting election anomalies. \cite{cantu2010supervised}, \cite{medzihorsky2015election}, \cite{zhang2019election} and the references therein are some recommended readings in this context.  

\subsection{Our contribution}

All of the above-cited studies make various attempts to detect the presence of electoral irregularities. However, not much concentration has been given to confirm whether the interference actually causes a change in the true majority. To explain this further, consider a two-candidate election where the first candidate is likely to get more than 70\% votes. In this scenario, even if the second candidate intentionally intervenes up to 20\% votes, the majority function does not change. We shall call this type of influences `insignificant'. Along the same line, an influence is termed `significant' if there is a high probability of a shift in majority on the commitment of the intervention. Our focus in this paper is to develop statistical tests which can detect the presence of significant electoral intervention under various real-life scenarios. 

This new method of testing whether any malpractice has occurred in the election is developed under two scenarios. More broadly, we consider the scenario where we are given only the final election result. Here, we make appropriate assumptions about the cost associated with { intervention} to develop the test. In a more specific case, in addition to the final election result, we also have an exit poll data that would help us in drawing relevant inference. The main focus of our paper is this scenario with the added information of prior data. It is well-established that an exit poll can give an early indication as to how an election has turned out, especially because the counting process can be very time consuming in many elections. Polling is the only way of collecting pertinent information as the voters are anonymous. Exit polls have been historically used throughout the world to identify the degree of potential election fraud. Some examples of this are the 2004 Venezuelan recall referendum, and the 2004 Ukrainian presidential election, both of which will be discussed in greater detail later in this paper. We also point out that opinion polls can provide prior information as well, albeit they serve as a much weaker predictor of the election result, primarily because they are carried out before the election takes place. 



As an application of the proposed approach, we first perform an in-depth analysis for the $2016$ USA Presidential Elections. We check for the presence of statistically significant intervention in each of the states of USA. The exit poll data is obtained from \cite{ortiz2016exitpoll} and the final-election result is obtained from \cite{data2018}. \Cref{fig1} shows the exit poll results and the final election results side-by-side. It is interesting to note that Michigan, Nevada, North Carolina, Pennsylvania, Wisconsin are the only five states where the two results do not match. We  shall use our proposed methodology to investigate these five states in more detail. 

\begin{figure}[!htb]
    \centering
    \includegraphics[width=\textwidth,keepaspectratio]{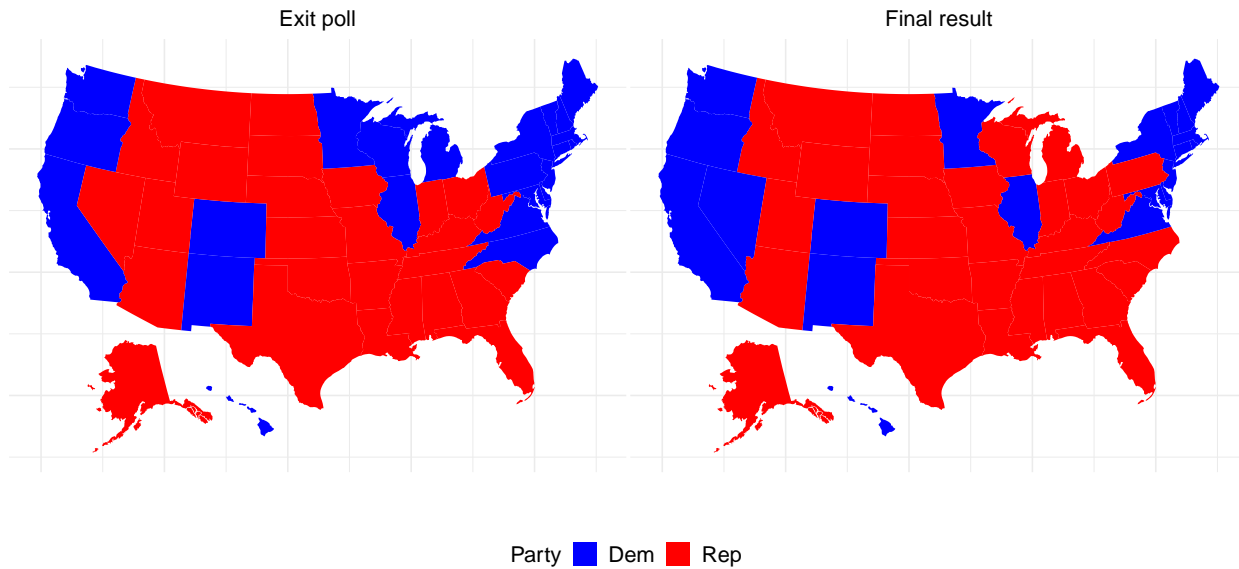}
    \caption{Comparison of Final election result and exit poll predictions for $2016$ USA Presidential election. Dem and Rep refer to the Democratic Party and Republican Party respectively.}
    \label{fig1}
\end{figure}

To further demonstrate the usefulness of the proposed approach, we perform statistical analysis for testing the presence of significant intervention in the $2004$ Ukrainian Presidential election and $2004$ Venezuelan recall referendum, both being known to have experienced electoral frauds. The official final election results and the exit poll data for these two studies are obtained from \cite{paniotto2004ukraine} and \cite{prado20112004}, respectively.

\subsection{Organization}

The paper is divided into the following sections. \Cref{sec:model} discusses the voter model in depth, as we introduce the intervention in a formal way. The main results in this section are the distribution of votes after intervention. Later in the section we compute the distribution of the intervened maximum. Next, \Cref{rigging} provides the method for testing for anomalies in elections under the two scenarios mentioned in the previous section.
In \Cref{sec:simulation}, we perform a detailed simulation study under various scenarios and discuss the findings. Real data applications are presented in \cref{sec:application}, while some concluding remarks and scopes of future research are summarized in \Cref{sec:conclusions}. 


\section{Model}
\label{sec:model}

Throughout this article, $\ind\{\cdot\}$ denotes the indicator function, i.e.\ $\ind\{A\}=1$ if $A$ is true and is 0 otherwise. For a real number $a$, $\sgn(a)$ is the signum function which takes the value $1, 0, -1$ depending on if $a$ is positive, zero or negative. The notations $\convP$ and $\convD$ indicate convergence in probability and convergence in law (distribution), respectively. The inner product of two vectors is denoted by $\langle \cdot,\cdot \rangle$. A $k$-variate normal distribution with mean $\theta$ and dispersion matrix $\Psi$ is denoted by $\gauss_k(\theta,\Psi)$. For a univariate normal distribution, we drop the subscript $k$ for convenience.
 
Our work is motivated primarily by electoral systems involving two candidates. Along the lines of \cite{condorcet1785essay}, we assume multiple voters who cast their preferences independently. We also assume that there is an overall popularity of each of the candidates, given by a proportion. Since there are two candidates, the sum of the two proportions would add up to one. Let us use $p$ to denote the overall popularity of the first candidate before the votes are cast, and we treat it as a random variable. Clearly, if $p_0$ is a realization of $p$, the proportion of votes in favour of the second candidate is $(1-p_0)$. These values are true parameters by which the voters independently exercise their choices. Every voter can choose one, and only one candidate. Henceforth, the opinion of each voter is given by a two-dimensional vector, with one entry as $1$ and the other as $0$. If a voter chooses the first candidate then it takes the value $(1,0)$, and otherwise it would be $(0,1)$. We assume in total there are $n$ voters. The mathematical model depicting this scenario is given in the next paragraph. 

Let $X_i$ be a two-dimensional vector which denotes the initial opinion of the $i^{th}$ voter, for $i=1,\hdots,n$. Throughout, we assume that $X_i$'s are independent and identically distributed (iid). We use $X_i=(1,0)$ (respectively $(0,1)$) if the $i^{th}$ voter originally supports the first candidate (respectively the second candidate). Thus, initially the distribution of $X_i$ is given as follows:
\begin{equation}\label{eq:distribution-xi}
    \P(X_i=t)=\begin{cases} p_0 & \text{ when } t=(1,0), \\
    1-p_0 & \text{ when } t=(0,1).
    \end{cases} 
\end{equation}


Let us now introduce the notion of intervention on votes. This is a form of outside influence. The primary motivation behind the idea of intervention is to study whether any electoral malpractice has happened during or after the casting of votes and before the vote-counting. Our objective is to statistically test whether the observed results are significantly different from what would have happened in the absence of this external influence. We assume a fixed form of this influence. In modeling this, we follow the notion of intervention introduced in \cite{hazla2019geometric}. Mathematically, it is denoted by a vector with all entries positive and is a transformation (calculated by the inner product) applied to all the voters. Depending on the relative magnitude of the entries which quantify the strength of the intervention, it may or may not alter the votes. In what follows, the probability of any voter supporting the first candidate after the intervention is assumed to be $p'$, as opposed to its original value of $p_0$ before intervention. The main results of this section are on the distribution of individual votes after the application of an intervention. 


Let the intervention vector $v=(\alpha,\beta)$, with $\alpha,\beta >0,$ be applied to each voter with probability $\pi_0$. Then, there are two cases:
\begin{itemize}
    \item If $v$ acts on $(1,0)$, the intervened vector is given by, $(1,0)+\langle\paren{1,0},\paren{\alpha,\beta}\rangle(\alpha,\beta)$, which is same as $(\alpha^2+1,\alpha\beta)$.
    \item If $v$ acts on $(0,1)$, the intervened vector is given by, $(0,1)+\langle\paren{0,1},\paren{\alpha,\beta}\rangle(\alpha,\beta)$, which is same as $(\alpha\beta,\beta^2+1)$.
\end{itemize}

Once we have the intervened vector, it is transformed to an opinion vector by looking at the maximum value between the two coordinates. To elaborate, we shall say that the updated opinion vector is $(1,0)$ if the first coordinate of the intervened vector is greater than the second one and is $(0,1)$ otherwise.

Note that the effect of the intervention in switching a vote depends on the magnitude of $\alpha,\beta$. For instance, if $v$ acts on $\paren{1,0}$ and if $\alpha^2+1 \geqslant \alpha\beta$, then the opinion vector remains the same i.e.\ $\paren{1,0}$. However, if $\alpha^2+1 < \alpha\beta$, the opinion vector switches to $\paren{0,1}$. Similarly, if the intervention acts on $\paren{0,1}$ and $\beta^2+1 \geqslant \alpha\beta$, then the opinion stays the same and otherwise, it switches to $\paren{1,0}$. 


\begin{lemma}
It is impossible to have an intervention that changes the opinion vector for all voters. 
\end{lemma}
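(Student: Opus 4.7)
The plan is to interpret the statement as follows: there is no intervention vector $v = (\alpha, \beta)$ with $\alpha, \beta > 0$ that simultaneously flips every $(1,0)$ opinion to $(0,1)$ and every $(0,1)$ opinion to $(1,0)$. I would argue by contradiction, directly exploiting the two comparison conditions spelled out in the paragraph preceding the lemma.

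First, suppose such a $v = (\alpha, \beta)$ exists. Applying $v$ to a voter whose original opinion is $(1,0)$ produces the intervened vector $(\alpha^2 + 1, \alpha\beta)$; for the updated opinion to switch to $(0,1)$, the discussion before the lemma requires $\alpha^2 + 1 < \alpha\beta$, i.e.\ $\alpha(\beta - \alpha) > 1$. Since $\alpha > 0$, this forces $\beta > \alpha$. Next, applying the same $v$ to a voter with original opinion $(0,1)$ yields $(\alpha\beta, \beta^2 + 1)$; for this to be reclassified as $(1,0)$, we need $\beta^2 + 1 < \alpha\beta$, i.e.\ $\beta(\alpha - \beta) > 1$, which, since $\beta > 0$, forces $\alpha > \beta$.

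The two inequalities $\beta > \alpha$ and $\alpha > \beta$ are incompatible, so no such $v$ can exist. I do not anticipate any technical obstacle here: the lemma reduces to a one-line strict-inequality argument as soon as the switching conditions stated in the paragraph just above the lemma are rewritten in the form $\alpha(\beta-\alpha) > 1$ and $\beta(\alpha-\beta) > 1$. The only mild subtlety is to make explicit that the positivity $\alpha, \beta > 0$ is what allows us to divide through and conclude $\beta > \alpha$ (respectively $\alpha > \beta$) from each inequality, which is why the statement is genuinely about the strict comparisons rather than about the signs of $\alpha, \beta$.
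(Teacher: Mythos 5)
Your proposal is correct and follows essentially the same route as the paper: both argue by contradiction from the two switching conditions $\alpha^2+1<\alpha\beta$ and $\beta^2+1<\alpha\beta$. The only difference is the final step --- the paper adds the two inequalities to obtain $(\alpha-\beta)^2+2<0$, which is impossible for any real $\alpha,\beta$ without invoking positivity, whereas you divide through by $\alpha$ and $\beta$ to get the incompatible conclusions $\beta>\alpha$ and $\alpha>\beta$, which does use $\alpha,\beta>0$; both closings are valid here.
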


\begin{proof}
Suppose that there exists an intervention vector $v=\paren{\alpha,\beta}$, which switches both $\paren{1,0}$ to $\paren{0,1}$ and $\paren{0,1}$ to $\paren{1,0}$. Then, we must have: 
\begin{equation}\label{eq:intervention-cond}
\alpha^2+1<\alpha\beta \text{ and }\beta^2+1<\alpha\beta,
\end{equation}
subsequently implying $(\alpha-\beta)^2+2<0$, which is impossible for real $\alpha,\beta$. Hence all opinions upon which the intervention acts cannot get switched. 
\end{proof}

The implication of the above lemma is pivotal for the following discussion, as we aim to develop a test for detecting irregularities in an election. It establishes that there cannot be an intervention vector which would corrupt the true opinion of all the voters. We discuss next to what extent an intervention can influence the choice of a particular voter. 

Consider the following notations. As mentioned earlier, there are $n$ voters under our consideration, whose initial opinion vectors are given by $X_i$, for $i=1,\hdots,n$. The probability that a particular voter supports the first candidate is denoted as $p_0$. In other words, all $X_i$'s are assumed to be iid $\ber(p_0)$ type random variables, where $p_0$ denotes the probability of the event $(1,0)$. Now, suppose that $\pi_0$ proportion of voters have been acted upon by an intervention $v=\paren{\alpha,\beta}$, where $\alpha,\beta \geqslant 0$, and let the updated opinion vectors be given by $X_i'$, for $i=1,\hdots,n$.  We should note that the proportion of voters who have been acted upon by the intervention $v$ is a random variable $\pi$ whose support is on $[0,1]$. Here, $\pi_0$ is a realization of $\pi$. Also, let $p'$ be the true post-intervention probability of supporting the first candidate for a randomly selected voter. In line with the above, it is easy to argue that all $X_i'$ are iid $\ber$ type random variables with parameter $p'$. The relationship of $p'$ with $p_0$ and $\pi_0$ is discussed in the following result.

\begin{lemma}\label{lem:intervention-vote}
Depending on the values of $\alpha,\beta$, there are three possible cases: (a) if $\alpha^2+1 \geqslant \alpha\beta$, $\beta^2+1 \geqslant \alpha\beta$, $p'=p_0$; (b) if $\alpha^2+1 < \alpha\beta$, $\beta^2+1 \geqslant \alpha\beta$, $p'=p_0-p_0\pi_0$; and (c) if $\alpha^2+1 \geqslant \alpha\beta$, $\beta^2+1 < \alpha\beta$, $p'=p_0+\pi_0-p_0\pi_0$.
\end{lemma}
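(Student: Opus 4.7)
The plan is to proceed by exhaustive case analysis on the sign of $\alpha^2+1-\alpha\beta$ and $\beta^2+1-\alpha\beta$, applying the rule given in the paragraphs preceding the lemma that determines whether a voter's updated opinion vector is $(1,0)$ or $(0,1)$ after the intervention. The previous lemma rules out the case in which both $\alpha^2+1<\alpha\beta$ and $\beta^2+1<\alpha\beta$ hold, so exactly three sub-cases remain, matching (a), (b), and (c) of the statement.

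In each sub-case, I will compute $p'=\P(X_i'=(1,0))$ by the law of total probability, conditioning on two independent events: (i) whether the $i$-th voter was acted upon by the intervention, which occurs with probability $\pi_0$; and (ii) the voter's pre-intervention opinion, which is $(1,0)$ with probability $p_0$ and $(0,1)$ with probability $1-p_0$. For case (a), neither type of vote is switched by the intervention, so the post-intervention opinion coincides with the pre-intervention opinion and $p'=p_0$ trivially. For case (b), voters acted upon and originally supporting the first candidate are flipped to $(0,1)$, while all other voters retain their original opinion; the event $\{X_i'=(1,0)\}$ therefore equals the event ``originally $(1,0)$ and not acted upon,'' giving $p'=p_0(1-\pi_0)=p_0-p_0\pi_0$. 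For case (c), voters acted upon and originally supporting the second candidate are flipped to $(1,0)$, while voters originally supporting the first candidate remain unchanged; the event $\{X_i'=(1,0)\}$ is then the disjoint union of ``originally $(1,0)$'' and ``originally $(0,1)$ and acted upon,'' yielding $p'=p_0+(1-p_0)\pi_0=p_0+\pi_0-p_0\pi_0$.

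The only genuine subtlety is the independence of the two conditioning events: the decision of which voters receive the intervention must be made independently of the pre-intervention opinions, which is implicit in the model as described, so the probabilities multiply cleanly. There is no real obstacle beyond careful bookkeeping; the argument is essentially a direct application of the transformation rule spelled out above the lemma, combined with the previous lemma that eliminates the fourth logically possible sub-case.
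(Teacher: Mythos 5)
Your proposal is correct and follows essentially the same route as the paper: a case split driven by the switching conditions, followed by a direct computation of $\P(X_i'=(1,0))$ using the independence of the intervention assignment and the pre-intervention opinion. You spell out case (c) explicitly where the paper merely says it ``follows in a similar fashion,'' but the underlying argument is identical.
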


\begin{proof}
For $\alpha^2+1 \geqslant \alpha\beta$, $\beta^2+1 \geqslant \alpha\beta$, we observed above that none of the opinion vectors changes, which proves part (a). 

Under the conditions of part (b), $v$ switches $\paren{1,0}$ to $\paren{0,1}$, but there is no change if $v$ acts on $\paren{0,1}$. Thus, $X_i'=X_i$ with probability $(1-\pi_0)$ and is equal to $(0,1)$ with probability $\pi_0$. It subsequently implies that
\begin{equation} 
\label{eq:case2}
\P\left[X_1'= (1,0)\right] = \P\left[X_1'=X_1\right]\P\left[X_1=(1,0)\right] = \paren{1-\pi_0}p_0.
\end{equation}

Hence, we see that in this case, $p'=p_0-p_0\pi_0$. Part (c) follows in a similar fashion by noting that $v$ can change only the opinion vector $(0,1)$.
\end{proof}

%



In the current work, our focus is on the democratic method of election, where the majority function determines the winning candidate. That is to say, the candidate who obtains more votes wins the election. Following the Condorcet Jury theorem in \cite{condorcet1785essay}, where voting is considered as an aggregation procedure and where the effectiveness of the majority opinion is shown asymptotically, we go with the same function here. 

For the following discussion, we make slight notational changes. Let us denote the opinion vector $\paren{1,0}$ by $1$ and the opinion vector $\paren{0,1}$ by $-1$, for convenience. Let $m\paren{\cdot}$ denote the majority function, i.e.\ $m\paren{\cdot}=1$ if the first candidate gets the majority and $-1$ otherwise. For $\vec{X}=\paren{X_1,X_2,\hdots,X_n}$ and $\vec{X'}=\paren{X_1',X_2',...,X_n'}$, we shall use $m(\vec{X})$ to denote the initial majority (before any kind of intervention) and $m\paren{\vec{X'}}$ to denote the post-intervention majority between the two candidates. 

The focus of this section is the computation of the distribution of the post-intervention majority, and how it is related to the original majority. Based on $p_0$ and $p'$ defined earlier, we want to calculate the probability of $m(\vec{X'})$ remaining same as $m(\vec{X})$. Naturally, the lower this probability is, the higher is the chance that the final outcome of the election changes because of the intervention. In \Cref{rigging}, we develop a statistical test based on this probability, the computation of which is given by the following proposition. The proof is deferred to \Cref{sec:proofs}.

\begin{prop}\label{thm:intervention}
Let $f\paren{\mu,\Sigma,n}$ be the probability that the two components of a $\gauss_2(\mu,\Sigma/n)$ distribution are of same sign. If the intervention vector is $v=\paren{\alpha,\beta}$ with $\alpha,\beta \geqslant 0$, then depending on the values of $\alpha,\beta$, the following results hold.
\begin{itemize}
    \item[(a)] If $\alpha^2+1 \geqslant \alpha\beta$, $\beta^2+1 \geqslant \alpha\beta$, $\P\paren{m\paren{\vec{X}}=m\paren{\vec{X'}}}=1$.
    \item[(b)] If $\alpha^2+1 < \alpha\beta$, $\beta^2+1 \geqslant \alpha\beta$, $\P\paren{m\paren{\vec{X}}=m\paren{\vec{X'}}}=f\paren{\mu_2,\Sigma_2,n} $.
    \item[(c)] If $\alpha^2+1 \geqslant \alpha\beta$, $\beta^2+1 < \alpha\beta$, $\P\paren{m\paren{\vec{X}}=m\paren{\vec{X'}}}=f\paren{\mu_3,\Sigma_3,n} $.
\end{itemize}
In the above, 
\begin{equation}
\label{eq:definition mu,sigma}
    \mu_2=\mu_3=\begin{pmatrix} 2p_0-1 \\ 2p'-1 \end{pmatrix}, \; \Sigma_2=\begin{bmatrix}4p_0-4p_0^2 & 4p'-4p_0p' \\
    4p'-4p_0p' & 4p'-4p'^2
    \end{bmatrix}, \;\Sigma_3=\begin{bmatrix}4p_0-4p_0^2 & 4p_0-4p_0p' \\
    4p_0-4p_0p' & 4p'-4p'^2
    \end{bmatrix}.
\end{equation}
\end{prop}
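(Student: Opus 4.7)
The plan is to split by cases. Part (a) is immediate from the structural analysis preceding Lemma 1: when $\alpha^2+1\geqslant \alpha\beta$ and $\beta^2+1\geqslant \alpha\beta$, the opinion derived from the intervened vector (by taking the coordinate of larger value) coincides with the original opinion on both $(1,0)$ and $(0,1)$, so $X_i'=X_i$ for every $i$, and hence $m(\vec X)=m(\vec X')$ with probability one.

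For parts (b) and (c), I would adopt the $\pm 1$ encoding introduced just before the proposition, under which the majority function reads $m(\vec X)=\sgn(\bar X_n)$ and $m(\vec X')=\sgn(\bar X_n')$, where $\bar X_n=\tfrac{1}{n}\sum_{i=1}^n X_i$ and analogously for $\bar X_n'$. The event $\{m(\vec X)=m(\vec X')\}$ is then exactly the event that the two components of $(\bar X_n,\bar X_n')$ share the same sign. Since $(X_i,X_i')$ are iid vectors in $\R^2$, the bivariate CLT gives $(\bar X_n,\bar X_n')\approx \gauss_2(\mu,\Sigma/n)$ for large $n$, so the target probability matches $f(\mu,\Sigma,n)$ by the definition of $f$. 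It remains to identify $\mu$ and $\Sigma$. By \Cref{lem:intervention-vote}, the mean is $\mu=(2p_0-1,\,2p'-1)^\top$, and the diagonal entries of $\Sigma$ are the Bernoulli variances $\var(X_i)=4p_0(1-p_0)$ and $\var(X_i')=4p'(1-p')$, matching the diagonals of both $\Sigma_2$ and $\Sigma_3$.

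The only non-trivial calculation is the off-diagonal $\cov(X_i,X_i')$, which I would obtain by conditioning on the Bernoulli event that the intervention is applied to voter $i$. In case (b), whenever the intervention acts on voter $i$ the updated opinion is forced to $-1$ (since $(1,0)$ flips and $(0,1)$ is preserved), so $\E[X_iX_i'\mid \text{intervene}]=-\E[X_i]$ and $\E[X_iX_i'\mid \text{not intervene}]=\E[X_i^2]=1$. Combining with $p'=p_0(1-\pi_0)$ from \Cref{lem:intervention-vote}(b) and subtracting $\E[X_i]\E[X_i']$ yields $\cov(X_i,X_i')=4p_0(1-\pi_0)(1-p_0)=4p'(1-p_0)$, which is precisely the off-diagonal entry of $\Sigma_2$. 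Case (c) is symmetric: when the intervention acts it forces $X_i'=+1$, we use $p'=p_0+\pi_0(1-p_0)$, and the same bookkeeping delivers $\cov(X_i,X_i')=4p_0(1-p')$, matching $\Sigma_3$. The main obstacle I foresee is precisely this covariance bookkeeping and its algebraic reconciliation with the stated matrices; once done, the conclusion is a direct appeal to the multivariate CLT together with the definition of $f$.
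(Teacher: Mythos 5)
Your proposal is correct and follows essentially the same route as the paper: reduce the event $\{m(\vec X)=m(\vec X')\}$ to the two components of the sample mean of the iid pairs $(X_i,X_i')$ sharing a sign, invoke the bivariate CLT, and identify $\mu$ and $\Sigma$. Your conditioning argument for the off-diagonal entries checks out (e.g.\ in case (b), $\E[X_iX_i']=1-2p_0\pi_0$ gives $\cov(X_i,X_i')=4p'-4p_0p'$), and in fact supplies the covariance bookkeeping that the paper simply asserts.
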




\section{Testing for irregularity in elections}
\label{rigging}

\Cref{thm:intervention} is key to develop a test for detecting significant interventions in elections. Recall the notations $p_0$ and $p'$ and suppose that $\hat{p}'$ is the observed proportion of votes in favour of the first candidate. Without loss of generality, we assume that the second candidate has won the election, i.e.\ $\hat{p}' < 0.5$. In other words, we develop the theory under the assumptions of part (b) in \Cref{thm:intervention}. The other cases would follow exactly similarly. 

Our objective is to check if there was any external influence involved in the victory of the second candidate. To put it in a more formal way, we wish to perform a test where the two hypotheses are framed as follows:
\begin{equation}
\label{eq:hypothesis}
    \begin{split}
        & H_0^n: \text{Significant electoral intervention has not occurred}, \\
        & H_1^n: \text{Significant electoral intervention has occurred}. 
    \end{split}
\end{equation}

We maintain the superscript $n$ to reflect the number of voters in the data. An intervention is termed as `significant' if there is a high probability of the majority being changed following the intervention i.e.\ if $f(\mu_2,\Sigma_2,n)<\tau_{c}$, for some pre-defined critical value $\tau_c \in (0,1)$. Generally, for simulation studies and for application to real data, we shall use $\tau_c=0.5$. As mentioned before, we work under two different frameworks and they are described in the following subsections. 

 

 
\subsection{Test procedure without prior data}
\label{sec:final data available}
 
Under the aforementioned assumptions, note that $\P(m(\vec{X})=m(\vec{X'}))=f(\mu_2,\Sigma_2,n)$, which is a function of $p_0,p'$ and $n$. Equivalently, we can also think of it as a function of $\pi_0,p'$ and $n$. Thus, we can say that, $f(\mu_2,\Sigma_2,n)=\eta(p_0,p',n)=\xi(\pi_0,p',n)$ for suitable functions $\eta(\cdot)$ and $\xi(\cdot)$. We wish to find a $100(1-\alpha) \%$ confidence interval for $f(\mu_2,\Sigma_2,n)$. This confidence interval can then be used to construct the test for detecting irregularity in the election.
 
Since no representative data on $p_0$ or $\pi_0$ is available, in order to develop the test procedure, we assume that the prior distribution of the random variable $\pi$ is known (note that as already mentioned in \Cref{sec:model}, $\pi_0$ is a realization of $\pi$). Let the probability density function (pdf) of $\pi$ be given by $h(\cdot)$, a continuous function supported on $\bracket{0,1}$, and let $H(\cdot)$ denote the corresponding distribution function (cdf). This density $h(\cdot)$ will be referred to as \textit{cost function} in our paper. One can interpret it as the cost associated with the intervention, which typically would be an increasing function of the proportion of votes influenced by the second candidate. In other words, the probability that $\pi$ is high should be quite low. From \Cref{lem:intervention-vote}, we know that, ideally, $p_0-(p'/(1-\pi_0))=0$. Since $\pi$ is distributed as $h(\cdot)$ in this case, we take into account the following assumption.

\begin{assumption}\label{asm-originalprop}
The cost function $h(\cdot)$ satisfies the following:
\begin{equation}
    \int_{0}^{1-p'}\paren{\frac{\pi_0-\pi}{1-\pi}}h\paren{\pi}d\pi=0.
\end{equation}
\end{assumption} 

It can be argued (shown in the proof of \Cref{thm:confidence-interval} in \Cref{sec:proofs}) that \Cref{asm-originalprop} implies that, for large population size, the original proportion of voters voting for the first candidate ($p_0$) matches with the expected value of $p$ when the final proportion of voters voting for the first candidate is $\hat{p}'$. Naturally, it is a sensible assumption which attempts to estimate $p_0$ by leveraging appropriate cost functions. The cost function $h(\cdot)$ can be a decreasing pdf like that of a truncated exponential distribution with parameter $\lambda$ or like a beta distribution with parameter $(1,\beta)$. 

Since the initial $p_0$ of the population before any intervention is unknown, we try to estimate $p_0$ from $\hat{p}'$ using the cost function. To that end, define
\begin{equation}
    \label{eq:definition_P}
    \phi\paren{{p}'}=\frac{{p}'}{H\paren{1-{p}'}}\int_{0}^{1-{p}'}\frac{1}{1-\pi}h\paren{\pi}d\pi.
\end{equation}

We note that $\phi({p}')$ is the expected value of $p$ when the final proportion of voters voting for the first candidate (final election result) is ${p}'$. Hence, $\phi(\hat{p}')$ is an estimate of $p_0$ that we get from $\hat{p}'$ using the cost function, $h(\cdot)$. Let $\gamma_n(\hat{p}')=\sqrt{\hat{p}'(1-\hat{p}')}z_{\beta/2}/\sqrt{n}$, where $(1-\beta)^2=1-\alpha$, and $z_{a}$ is the upper-$a$ quantile of a standard normal distribution. We consider the following two intervals,
\begin{align}
    S_1\left(\hat{p}'\right) &= \left(\hat{p}'-\gamma_n\left(\hat{p}'\right),\hat{p}'+\gamma_n\left(\hat{p}'\right)\right),  \\
    S_2\left(\hat{p}'\right) &= \left(\phi\left(\hat{p}'\right)-\abs{\phi'\left(\hat{p}'\right)}\gamma_n\left(\hat{p}'\right),\phi\left(\hat{p}'\right)+\abs{\phi'\left(\hat{p}'\right)}\gamma_n\left(\hat{p}'\right)\right),
\end{align}

We shall prove in \Cref{sec:proofs} that $S_1\paren{\hat{p}'}$ and $S_2\paren{\hat{p}'}$ are the $100(1-\beta) \%$ confidence intervals for $p'$ and $\phi(p')$ respectively. The following theorem illustrates how to compute the $100(1-\alpha) \%$ confidence interval for $f(\mu_2,\Sigma_2,n)$ when the distribution of $\pi$ is known to us. We know that if any value lies outside the confidence interval of confidence coefficient $100(1-\alpha)\%$ that value is rejected at the level of significance $\alpha$. Keeping this in mind, we define our test statistics for the problem given by \cref{eq:hypothesis}. 

\begin{theorem} \label{thm:confidence-interval}
Suppose the density of $\pi$ is given by $h(\cdot)$, where $h(\cdot)$ is a continuous density supported on $\bracket{0,1}$, and the cdf of $\pi$ is given by $H(\cdot)$. Then, under \Cref{asm-originalprop}, $(m,M)$ is a $100(1-\alpha)\%$ confidence interval for $f(\mu_2,\Sigma_2,n)$, where
    \begin{equation}
        \label{eq:defn_m1_M1}
        \begin{split}
            & m=\inf_{\left\{p' \in S_1\paren{\hat{p}'}, \; p_0 \in S_2\paren{\hat{p}'}\right\}}\eta\paren{p_0,p',n}, \\
            & M=\sup_{\left\{p' \in S_1\paren{\hat{p}'}, \; p_0 \in S_2\paren{\hat{p}'}\right\}}\eta\paren{p_0,p',n}.
        \end{split}
    \end{equation}

Further, if we use the decision rule that the hypothesis $H_0^n$ (see \cref{eq:hypothesis}) is rejected for $M<\tau_c$, then it is a consistent test at level of significance $\alpha$. 
\end{theorem}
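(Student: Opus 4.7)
The plan is to build asymptotic confidence intervals for the two scalar parameters $p'$ and $p_0$ from the single observation $\hat{p}'$, and then lift them through the continuous function $\eta$ to a confidence interval for $f(\mu_2,\Sigma_2,n)=\eta(p_0,p',n)$; the rejection rule for the test will then be read off this interval. The essential preparatory identity is $p_0=\phi(p')$, which I would extract from \Cref{asm-originalprop} combined with \Cref{lem:intervention-vote}(b).

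To derive $p_0=\phi(p')$, recall from \Cref{lem:intervention-vote}(b) that $p_0=p'/(1-\pi_0)$; the support restriction $\pi\in[0,1-p']$ is forced by $p_0\leq 1$. Rewriting \Cref{asm-originalprop} using the decomposition $\pi/(1-\pi)=1/(1-\pi)-1$ yields
\[
\pi_0\int_{0}^{1-p'}\frac{h(\pi)}{1-\pi}\,d\pi \;=\; \int_{0}^{1-p'}\frac{h(\pi)}{1-\pi}\,d\pi \;-\; H(1-p'),
\]
so $1-\pi_0=H(1-p')\big/\!\int_{0}^{1-p'}h(\pi)/(1-\pi)\,d\pi$, and hence $p_0=p'/(1-\pi_0)=\phi(p')$ by the definition in \eqref{eq:definition_P}.

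Next I would establish the two marginal confidence intervals. Since $\hat{p}'$ is the sample mean of $n$ iid $\ber(p')$ indicators, the CLT together with Slutsky's theorem gives $\sqrt{n}(\hat{p}'-p')/\sqrt{\hat{p}'(1-\hat{p}')}\convD \gauss(0,1)$, so $S_1(\hat{p}')$ is an asymptotic $100(1-\beta)\%$ CI for $p'$. Because $h$ is continuous and $H(1-p')>0$, $\phi$ is differentiable on $(0,1)$; the Delta method then yields $\sqrt{n}(\phi(\hat{p}')-\phi(p'))\convD \gauss\!\left(0,(\phi'(p'))^2 p'(1-p')\right)$, and plugging $\hat{p}'$ into the standard error produces $S_2(\hat{p}')$ as an asymptotic $100(1-\beta)\%$ CI for $\phi(p')=p_0$.

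To close the argument, continuity of $\eta(\cdot,\cdot,n)$ in its first two arguments implies that on the event $\{p'\in S_1(\hat{p}')\}\cap\{p_0\in S_2(\hat{p}')\}$ the true value $f(\mu_2,\Sigma_2,n)=\eta(p_0,p',n)$ lies in $[m,M]$. The choice $(1-\beta)^2=1-\alpha$ packages the two individual levels into joint coverage $1-\alpha$; this is the main technical subtlety, since $S_1$ and $S_2$ are both constructed from $\hat{p}'$ and are therefore far from independent — but asymptotically the two coverage events nearly coincide, so the joint coverage is bounded below by $1-\beta\geq 1-\alpha$, which is enough. For consistency, under $H_0^n$ we have $f(\mu_2,\Sigma_2,n)\geq \tau_c$, so $\{M<\tau_c\}\subseteq\{f\notin[m,M]\}$ gives asymptotic size at most $\alpha$; under $H_1^n$ we have $f<\tau_c$ while $\gamma_n(\hat{p}')=O_{\P}(n^{-1/2})\to 0$, so $S_1$ and $S_2$ collapse around the true parameters and continuity of $\eta$ yields $M\convP f<\tau_c$, whence $\P(M<\tau_c\mid H_1^n)\to 1$.
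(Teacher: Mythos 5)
Your proposal follows essentially the same route as the paper's proof: deriving $p_0=\phi(p')$ from \Cref{asm-originalprop} via the identity $p_0(1-\pi_0)=p'$, building $S_1$ from the CLT plus Slutsky, building $S_2$ from the Delta method, combining the two for joint coverage at least $1-\alpha$, and then obtaining the level bound from the confidence-interval duality and consistency from the collapse of $S_1,S_2$ to $\{p'\},\{p_0\}$ as $n\to\infty$. The one point of departure is the joint-coverage step: the paper simply asserts $\P\paren{p'\in S_1,\ \phi(p')\in S_2}\geqslant(1-\beta)^2$ (which for arbitrary dependent events would only give the Bonferroni bound $1-2\beta$), whereas you explicitly acknowledge the dependence and argue that the two coverage events asymptotically coincide, so the joint probability tends to $1-\beta\geqslant(1-\beta)^2=1-\alpha$ --- a more defensible justification of the same inequality.
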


Thus, if we know the prior density of $\pi$ or if we can estimate it from past elections, we can perform the test for the presence of significant electoral intervention using \Cref{thm:confidence-interval}. However, in practice, due to insufficient information, it might be quite difficult to obtain an estimate of the density of $\pi$ in most situations. In \Cref{sec:simulation}, we shall conduct a detailed simulation study assuming various distributions on $\pi$. Examining the performance of our test under various scenarios, we subsequently recommend some default distributions on $\pi$ that one can use to perform this test so that the type-$1$ error is controlled and the power is good. We next turn our attention to see how the test discussed in \Cref{thm:confidence-interval} can be improved in case we have an exit poll data for the concerned election.

\subsection{Test procedure using exit poll data}
\label{sec:final and exit available}

Consider the scenario where both the final election result and some prior information of $p_0$ are available. If the prior density of $p$ is known to us or can be estimated from past elections, then one can easily modify the steps from \Cref{sec:final data available} to test for significant electoral intervention. The following remark captures this discussion and the proof follows directly from \Cref{thm:confidence-interval}.

\begin{remark}
\label{rem:remark_to_thm1}
Assume that for large population size, the true intervention probability $\pi_0$ matches with the expected value of $\pi$ when the final proportion of voters voting for the first candidate is $\hat{p}'$. Then, assuming that the prior distribution of $p$ is known or can be estimated, the test for the occurrence of significant electoral intervention can be performed in the same spirit as \Cref{thm:confidence-interval}. One only needs to replace $\eta\paren{\cdot}$ with $\xi\paren{\cdot}$ and $\phi(\hat{p}')$ with the expected value of $\pi$ given the final proportion of voters voting for the first candidate is $\hat{p}'$ in the definitions of $S_1\paren{\hat{p}'}$ and $S_2\paren{\hat{p}'}$.
\end{remark}

From a pragmatic standpoint, obtaining an estimate of the prior density of $p$ might be tricky. However, we can leverage the exit poll data to conduct the test in a slightly different way which would obviate the need for estimating the density of $p$. It is assumed that the exit poll is drawn uniformly from the entire population. The total number of voters under consideration is $n$, and we consider that the exit poll data consists of a small portion of the total population. Let it be of size $k < n$. Clearly, for a fixed $n$, the accuracy of the testing method should improve with a larger value of $k$. 

Suppose, $\hat{p}_k$ is the observed proportion of voters voting for the first candidate in the exit poll data. We do not require any other assumption in this case. For the hypothesis testing problem in \cref{eq:hypothesis}, we can follow a similar procedure as in \Cref{thm:confidence-interval}. Define the test statistic $\mathscr{T}_1$ which is calculated identically to $M$ from \cref{eq:defn_m1_M1}, but with $S_2(\hat{p}')$ therein replaced by 
\begin{equation}
     C_2 \left({\hat{p}_k}\right) = \left(\hat{p}_k-\gamma_k \left(\hat{p}_k\right), \hat{p}_k+\gamma_k\left(\hat{p}_k\right)\right).
 \end{equation}



Note that $C_2(\hat{p}_k)$ provides a more accurate confidence interval for $p_0$ as it is based on a representative data of the entire population. It also explains why no other assumption is needed in this case. We can adopt the decision rule to reject $H_0^n$ if $\mathscr{T}_1<\tau_c$ i.e.\ if the probability of the majority remaining unchanged is small. By following identical steps as in \Cref{sec:final data available}, it can be shown that this results in a consistent level-$\alpha$ test too. The following corollary provides the necessary details. The proof is elaborated in \Cref{sec:proofs}.

\begin{corollary}\label{thm:exit poll theorem}
If we test the hypothesis given by \cref{eq:hypothesis} using the statistic $\mathscr{T}_1$ as outlined above, then it is a level-$\alpha$ test i.e.\ $\P_{H_0}(\mathscr{T}_1<\tau_c) \leqslant \alpha$. This is a consistent test as well.
\end{corollary}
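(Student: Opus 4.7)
The plan is to mirror the proof of Theorem \ref{thm:confidence-interval}, with the single structural change that the cost-function-based interval $S_2(\hat{p}')$ is replaced by the empirical exit-poll interval $C_2(\hat{p}_k)$. Because the exit poll supplies direct empirical information about $p_0$, Assumption \ref{asm-originalprop} is no longer needed, and the test statistic becomes $\mathscr{T}_1 = \sup_{\{p' \in S_1(\hat{p}'),\, p_0 \in C_2(\hat{p}_k)\}} \eta(p_0,p',n)$.

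First I would verify that $C_2(\hat{p}_k)$ is a $100(1-\beta)\%$ confidence interval for $p_0$. Since the exit poll is drawn uniformly from the population and reflects pre-intervention opinion, the $k$ sampled responses are iid $\ber(p_0)$, so the central limit theorem gives $\sqrt{k}\,(\hat{p}_k - p_0) \convD \gauss(0,\, p_0(1-p_0))$; plugging in the consistent estimator $\hat{p}_k(1-\hat{p}_k)$ for the variance via Slutsky's lemma produces the stated interval. The analogous argument, already present in the proof of Theorem \ref{thm:confidence-interval}, yields $S_1(\hat{p}')$ as a $100(1-\beta)\%$ interval for $p'$.

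Next I would argue that the rectangle $S_1(\hat{p}') \times C_2(\hat{p}_k)$ is a $100(1-\alpha)\%$ confidence region for $(p',p_0)$, using $(1-\beta)^2 = 1-\alpha$. The key input is the (asymptotic) independence of $\hat{p}'$ and $\hat{p}_k$: when the exit-poll fraction $k/n$ is negligible, the overlap between the exit-poll subsample and the complete tally is vanishingly small and the two sample means may be treated as independent, so that the joint coverage factorizes. With rectangle coverage at least $1-\alpha$ in hand, continuity of $\eta(\cdot,\cdot,n)$ implies $f(\mu_2,\Sigma_2,n) = \eta(p_0,p',n) \in \eta(S_1 \times C_2)$ with probability at least $1-\alpha$; in particular $f(\mu_2,\Sigma_2,n) \leqslant \mathscr{T}_1$ with probability at least $1-\alpha$.

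The level-$\alpha$ conclusion is then immediate: under $H_0^n$ we have $f(\mu_2,\Sigma_2,n) \geqslant \tau_c$, so
\begin{equation*}
\P_{H_0}(\mathscr{T}_1 < \tau_c) \leqslant \P_{H_0}\!\paren{f(\mu_2,\Sigma_2,n) > \mathscr{T}_1} \leqslant \alpha.
\end{equation*}
For consistency I would observe that as $n,k \to \infty$ both half-widths $\gamma_n(\hat{p}')$ and $\gamma_k(\hat{p}_k)$ vanish while $\hat{p}' \convP p'$ and $\hat{p}_k \convP p_0$, so each interval contracts to the true point; continuity of $\eta$ then gives $\mathscr{T}_1 \convP f(\mu_2,\Sigma_2,n)$, and under $H_1^n$ this limit is strictly below $\tau_c$, yielding $\P(\mathscr{T}_1 < \tau_c) \to 1$. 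The main subtlety I anticipate is justifying the independence between $\hat{p}_k$ and $\hat{p}'$, since nominally they are computed from overlapping voters with the intervention applied in between; the cleanest resolution is the asymptotic regime $k/n \to 0$, or alternatively modeling the exit poll as a fresh independent sample drawn from the same underlying population, either of which renders the factorization step rigorous.
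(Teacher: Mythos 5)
Your proposal is correct and follows essentially the same route as the paper: establish that $C_2(\hat{p}_k)$ is a $100(1-\beta)\%$ confidence interval for $p_0$ via the CLT and Slutsky, combine it with $S_1(\hat{p}')$ to get joint coverage at least $(1-\beta)^2 = 1-\alpha$, deduce the level bound from $f(\mu_2,\Sigma_2,n)\geqslant\tau_c$ under $H_0^n$, and obtain consistency from both intervals shrinking to the true points. The only difference is that you explicitly flag and resolve the dependence between $\hat{p}'$ and $\hat{p}_k$, a point the paper's proof passes over silently when it says to ``mimic the steps'' of Theorem~\ref{thm:confidence-interval}; your treatment is, if anything, slightly more careful.
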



\section{Simulation study}
\label{sec:simulation}

In this section,  we provide a detailed simulation study to illustrate the theoretical results discussed so far as well as to compute the type-$1$ error and the power of the test across various setups.  



\subsection{Performance of the test procedure without prior data}
\label{simulation_cost}

Consider the setup of \Cref{sec:final data available}. In practice, it is impossible to know the true distribution of the intervention probability $\pi$, and without prior knowledge or relevant data, one can only `guess' the nature of $h(\cdot)$. Let us denote this as $\hat h(\cdot)$. We have already shown that under \Cref{asm-originalprop}, if $\hat h(\cdot)$ is the same as true $h(\cdot)$, then the type-$1$ error remains under $\alpha$ and the power goes to $1$ as the number of voters goes to  infinity. Here we demonstrate the behavior of the type-$1$ error and the power when $\hat h(\cdot)$ is possibly different from $h(\cdot)$. 

Throughout, we use the population size $n=100,000$ and simulate data under \Cref{asm-originalprop}. Both $h(\cdot)$ and $\hat h(\cdot)$ are chosen from $\expdist(\lambda)$ and $\betadist(1,\beta)$, where $\lambda$ and $\beta$ are taken from the set $\{10,20,30,50,100\}$. For each such scenario, $200$ simulations are conducted and subsequently, empirical type-$1$ error and power are computed.  All tests are carried out at level $0.05$ using $\tau_c=0.5$ (in accordance with the notation used in \Cref{sec:final data available}). These results are shown in \Cref{fig2} and \Cref{fig4}. The columns in both figures represent the true distribution $h(\cdot)$ while the rows represent the assumed distribution $\hat h(\cdot)$. The parameter values used in the graphs refer to the rate parameter $\lambda$ (respectively the shape parameter $\beta$) if the distribution is exponential (respectively beta). Note that the parameter value of $h(\cdot)$ is given along the $x$-axis whereas the same for $\hat h(\cdot)$ is indicated by different colors. We also point out that for the aforementioned choices of the distributions, it is ensured that the probability of intervention on all or most voters is not substantially big. This is a realistic scenario as discussed earlier. 


\begin{figure}[!ht]
    \centering
    \includegraphics[width=\textwidth,keepaspectratio]{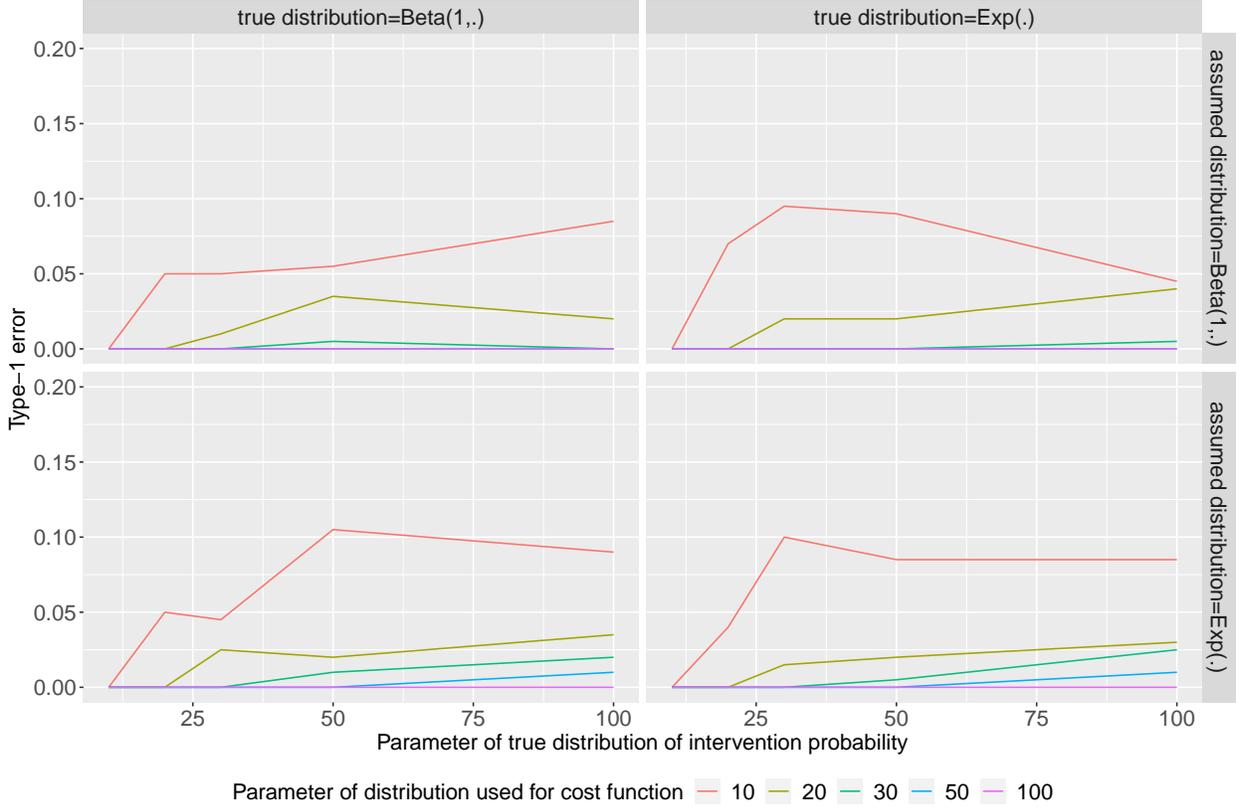}
    \caption{Empirical type-$1$ error of the test procedure with different cost functions.}
    \label{fig2}
\end{figure}

\begin{figure}[!ht]
    \centering
    \includegraphics[width=\textwidth,keepaspectratio]{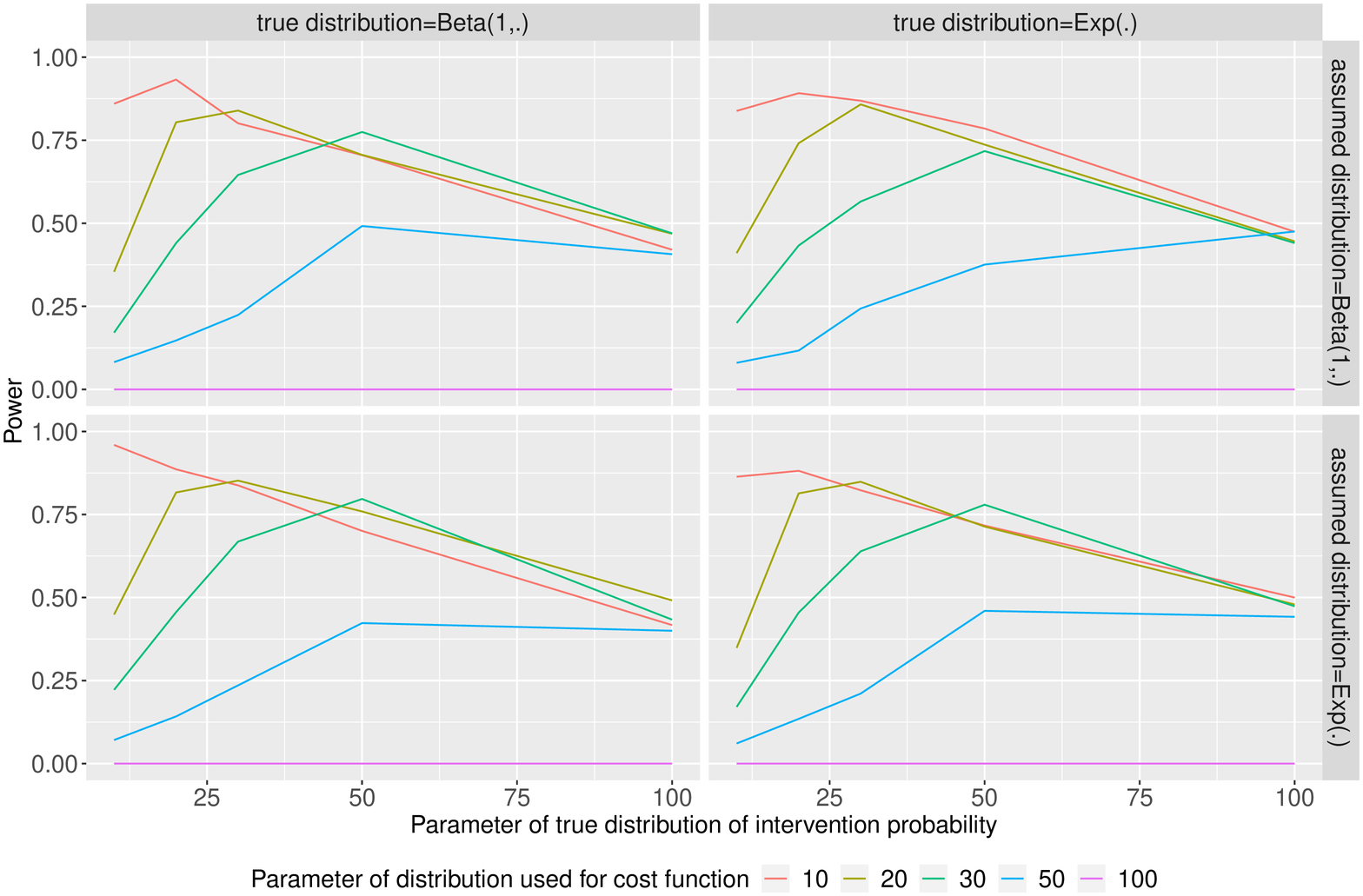}
    \caption{Empirical power of the test procedure with different cost functions.}
    \label{fig4}
\end{figure}

We note that both the type-$1$ error and power decrease with the increase in steepness (rate/shape parameter) of the distribution assumed for the cost function. Hence, higher the parameter of $\hat h(\cdot)$, lower is the chance of occurrence of any false positives. We observe from \Cref{fig2} that, when the parameter of $\hat h(\cdot)$ is above $20$, the type-$1$ error always stays below $0.05$. Thus, these can be possible choices for $\hat h(\cdot)$. We further observe in \Cref{fig4} that in all situations, the power is about 70\% or more if the parameter of the assumed distribution is around 20 to 30. It however becomes quite low when the parameter of $\hat h(\cdot)$ is $50$ or more. 

From the above, we can infer that if the original distribution of $\pi$ can be estimated through some prior knowledge, then one can use that in the cost function. This would keep the type-$1$ error below the chosen level $\alpha$ and give significantly high power. However, in case the prior density of $\pi$ is not available, and if there is no statistical way to estimate that density, then the parameter of $\hat h(\cdot)$ may be chosen to be a number between $20$ and $50$. In our real-life applications in \Cref{sec:application}, we shall assume that the distribution for the cost function is $\expdist(30)$ or $\betadist(1,30)$. Henceforth, the test statistic based on these two cost functions are referred to as $\mathscr{T}_2$ and $\mathscr{T}_3$, respectively.

\subsection{Performance of the test procedure using exit poll data}
\label{simulation_exit}

\Cref{sec:final and exit available} laid out the methodology of conducting the test for detection of electoral intervention when prior information about $p_0$, possibly obtained from an exit poll, is available. This subsection considers such cases and demonstrates how the power of the test varies with the size of the exit poll ($k$), the total population size ($n$), the true proportion of voters voting for the first candidate ($p_0$) and the value of the same after intervention ($p'$). 

The computation of the power is done under nine different combinations of $n$ and $k$. Three different sizes of total population are taken {\it viz.}\ $100,000$, $200,000$ and $1,000,000$. For each of these, $k$ is varied between $10,000$, $20,000$ and $50,000$. Now, for each of these nine combinations, 200 experiments are carried out and subsequently we calculate the empirical type-1 error and the empirical power of the test. The type-1 error, as expected, lies within the acceptable range and we omit those results for conciseness of the paper. The power values are plotted in \Cref{fig3} against the initial proportion of voters who voted for the first candidate. The color of the power plot represents the final proportion of voters who voted for the first candidate. We show the results for three different values of $p'$, {\it viz.}\ $0.45,0.49,0.495$, to understand the effect of the same on the power of the test. All of the experiments are conducted using the critical value $\tau_c=0.5$ and at level $0.05$. 

\begin{figure}[!ht]
    \centering
    \includegraphics[width=\textwidth,keepaspectratio]{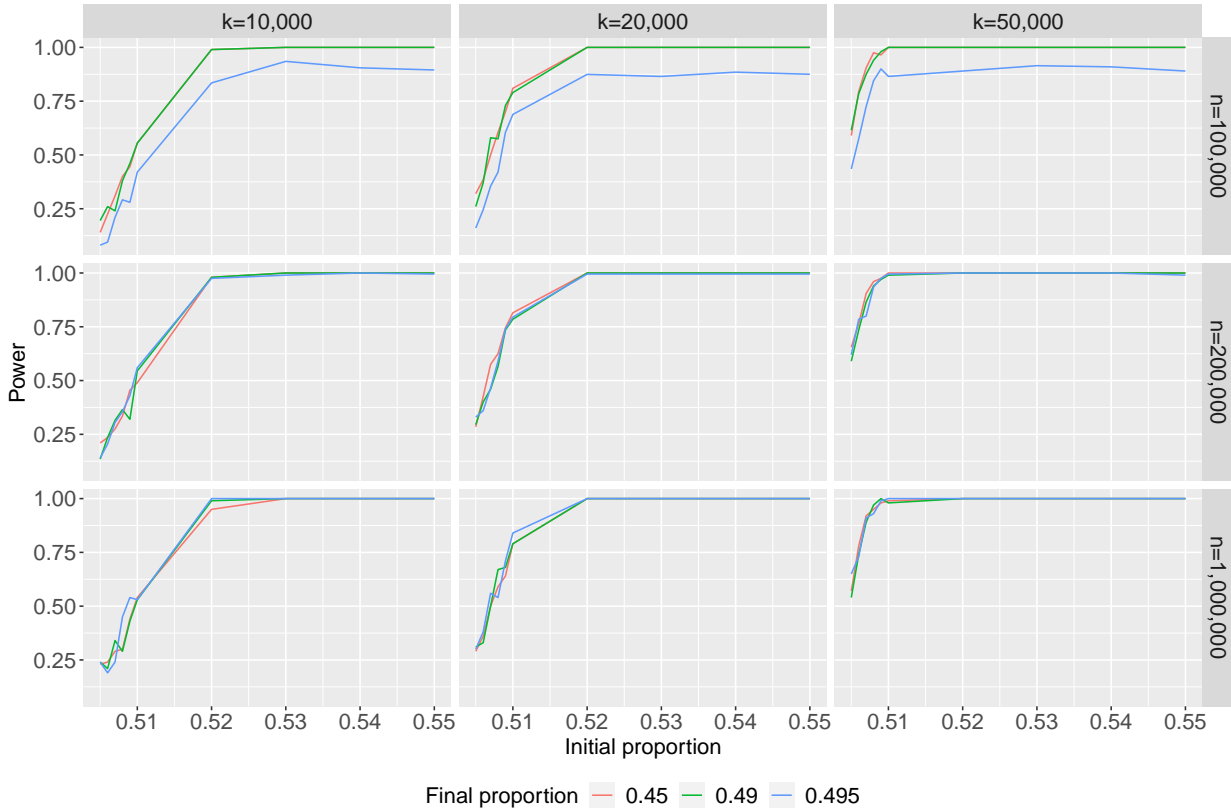}
    \caption{Power curves of the test for different combinations of initial and final (after intervention) proportion of voters voting for first candidate. $n$ represents the total population size and $k$ represents the size of the exit poll data.}
    \label{fig3}
\end{figure}

We observe from \Cref{fig3} that, if the size of the exit poll data is kept fixed, the power curves more or less stay the same. However, they change drastically when we change the exit poll size. They become steeper, and therefore better on more occasions, on increasing $k$. On the other hand, for $n=100,000$, the power curves are slightly different from the others. With total population size $200,000$ or above, we can see that the curves hardly vary. We can deduce some important conclusions from these observations. The power of the test primarily depends on the absolute value of the exit poll size and not so much on what proportion of the total population the exit poll data is. The test becomes more accurate with the increase in size of the exit poll data. Another thing to note is that when the total population size is large $(n \geqslant 200,000)$, the power curves of the three different colors almost coincide with each other. It suggests that the test does not depend significantly on the final proportion of the voters who voted for the first candidate, but mainly depends on the initial proportion of voters who would have voted for the first candidate. For an exit poll size of 20,000 or more, the test procedure has high power even when the true proportions of voters voting for the two candidates are very close (equivalent to saying $p_0$ is just above 0.5). In light of this, we can also argue that an exit poll data of size at-least $20,000$ can ensure accurate detection of significant influence with a very high probability. This can be explained theoretically by an application of usual central limit theorem which says that for sample size 20,000 or more, $\hat p_k$ can estimate $p_0$ within $\pm 1\%$ error with a very high probability.


\section{Real data application}
\label{sec:application}

\subsection{2016 USA Presidential Election}
\label{sec:us_election}

The $2016$ USA presidential election was the $58$th quadrennial presidential election, held on Tuesday, November $8$, $2016$ in which, the Republican candidate Donald Trump defeated the Democratic candidate Hillary Clinton. We select this data for a few reasons. First, it is the most recent presidential election in the USA in which the winning candidate lost the popular vote. Second, standard procedure of exit poll data collection took place in 2016, which is not the same due to the ongoing COVID-19 pandemic in 2020 USA Presidential Election. And most importantly, a popular theory suggested that illegal interference did take place in this election. On October 7, 2016, \cite{usintelligence} issued a joint statement that the intelligence community is confident that the Russian government directed the recent compromises of e-mails from a few USA persons and institutions, including some political organizations. A Special Counsel began in May $2017$ in order to investigate the alleged collusion between Russia and the republican party led by Trump. The counsel ended in March $2019$. According to \cite{mueller2019mueller}, the following conclusion was reached by the investigation: the Russians interfered ``in sweeping and systematic fashion'' to favor Trump's candidacy but it ``did not establish that members of the Trump campaign conspired or coordinated with the Russian government''.


Nonetheless, some discernible changes were observed in the outcomes of a few states. For the first time since $1984$, Wisconsin was won by the Republican party. Michigan and Pennsylvania were also won by them for the first time since $1988$. Jill Stein, the presidential candidate of the Green party, petitioned for a recount in these three states (\cite{steinrecount}). Around the same time, as reported by \cite{nevadareview}, Rocky De La Fuente, the presidential candidate of American Delta Party/Reform Party, was granted a partial recount in Nevada. Only minor changes to vote tallies were detected in the recounts in Wisconsin and Nevada, as reported in \cite{wisconsinnochange} and \cite{nevadanochange} respectively. A partial recount of Michigan ballot revealed some unbalanced precincts in Detroit and they were corrected as well. The state audit that followed came to the conclusion that the unbalanced precincts were a result of errors committed by the precinct workers and not a result of some major voter fraud (\cite{michigannochange}). Thus in spite of so many petitions, the recounts did not alter the outcome of the election. In the language of this article, we can say that these recounts did not identify significant electoral irregularities in any of the states. The above informations regarding the $2016$ USA Presidential election (in the above two paragraphs) were obtained from \cite{2016uselection} and then cross-verified from various news articles, reports that are cited above.

To statistically investigate the same problem through our proposed approach, we primarily use the exit poll data obtained from \cite{ortiz2016exitpoll} and the test statistic $\mathscr{T}_1$. Following the discussion in \Cref{sec:simulation}, we also check for robustness of our method by using the test statistics $\mathscr{T}_2$ and $\mathscr{T}_3$, which correspond to the cost functions $\expdist(30)$ and $\betadist(1,30)$ respectively. We point out that the exit poll data had multiple polls conducted for each state in various time points. Assuming that all the polls conducted for a particular state were disjoint, we combine them into a single exit poll. We note that the size of the exit poll is above $20,000$ for almost all the states, thereby ensuring high power of our test. Akin to the earlier sections, we use the critical value $\tau_c=0.5$. The individuals who neither voted for Trump nor for Clinton are removed from both the election result data and the exit poll data so that we are in the set-up of two-candidates voter model. We rescale the proportions accordingly. 

On making a naive comparison of the exit poll data with the final election results, we observe that there are five states where the results from the two sources do not match. This has been discussed briefly in \Cref{sec:introduction} (see \Cref{fig1}) as well. These five states are Michigan, Pennsylvania, Wisconsin, North Carolina and Nevada. In the first four states, Clinton was predicted to win and in the fifth state, Trump was predicted to win according to the exit poll data. Naturally, it makes sense to investigate these five states in more detail. We however start with a succinct account of the results for all of the other states. The value of the statistics $\mathscr{T}_1,\mathscr{T}_2$ and $\mathscr{T}_3$ are all found out to be 1 for all of these states, which provides absolutely no indication of any type of statistical evidence for electoral irregularity to make a case for recounting in these states. It is critical to observe that even without the information from exit polls, as we adopt the cost function approach, the inference remains exactly the same.


We now take a detailed look at the aforementioned five states, where the value of the test statistics are found to be less than 1. Relevant details for these states, along with the value of the test statistic and the conclusion, are presented in \Cref{tab:us-election}. For analysis purposes, we mainly pay attention to $\mathscr{T}_1$, as it has already been shown to have much higher power and less type-$1$ error than the other two methods. 

\begin{table}[!ht]
\caption{Results for the electoral irregularity detection method for five states in 2016 USA Presidential Elections.}
\label{tab:us-election}
\centering
 \begin{tabular}{l c c c c c} 
 \hline
   & Michigan & Nevada & North Carolina & Pennsylvania & Wisconsin \\[0.5ex] 
 \hline
Population Size & 4,548,382 & 1,051,318 & 4,551,947 & 5,897,174 & 2,786,263 \\ 
 Exit Poll Size & 214,280 & 156,628 & 305,032 & 342,667 & 192,636\\

 Exit Poll (Clinton) & 51.88\% & 49.60\% & 50.11\% & 51.60\%  & 52.77\% \\ 
 
 Exit Poll (Trump) & 48.12\% & 50.40\% & 49.89\% & 48.40\% & 47.23\% \\ 
 Final Result (Clinton) & 49.88\% & 51.29\% & 48.09\% & 49.62\% & 49.59\% \\
 
 Final Result (Trump) & 50.12\% & 48.71\% & 51.91\% & 50.38\% & 50.41\% \\
 
 $\mathscr{T}_1$ (test statistic) & 0.00109 & 0.00106 & 0.99711 & $\approx 0$ & $\approx 0$ \\ 
 
 Decision & Reject & Reject & Do not reject & Reject & Reject\\ 
 \hline
\end{tabular}
\end{table}

We observe that the values of the test statistic $\mathscr{T}_1$ for the states of Michigan, Nevada, Pennsylvania and Wisconsin are very close to zero indicating the presence of significant electoral anomalies in these four states which may have led to a change in the true majority. This gives sufficient statistical evidence to make a case for recounting in these four states. On the contrary, a very interesting situation occurs for the state of North Carolina. Here, although the exit poll prediction does not match with the final election result, the value of our test statistic comes out to be $0.99711$. Hence, $H_0^n$ is not rejected for this state. Thus we cannot say that significant electoral anomaly has been observed to make a case for recounting in this state. It is worth specifying that the proportions in the two samples (exit poll data and final results data) are indeed significantly different; but our test procedure does not detect significant illegal interference to change the majority function. It strengthens the importance of our test statistic over mere comparison of the exit poll data and the final election result. This also substantiates the fact that unlike the other four states no claim or petition of recount was made for North Carolina. 

In conclusion, significant evidence of influence is detected at 5\% level of significance in only four states, thereby providing statistical evidence for recounting. Of these four states, in Nevada, the final majority went in favor of the Democratic Party whereas for Michigan, Pennsylvania and Wisconsin, it was the other way round. No significant electoral anomaly is detected in any of the other states. It is imperative to declare that the the test statistics $\mathscr{T}_2$ and $\mathscr{T}_3$ render identical results in all states. Clearly, the decisions based on the cost function approach are the same for the entire election data. It establishes that even without sufficient prior information, the proposed approach works well and furnishes concrete support to the results obtained in this application.

\subsection{Fraudulent Presidential Elections from Ukraine and Venezuela}

Our second application revolves around the $2004$ Ukrainian Presidential election and the $2004$ Venezuelan recall referendum, both of which are known instances of electoral irregularities. Therefore, it would allow us to objectively evaluate whether our test procedure is able to detect irregularities in similar settings. 

According to the electoral law of Ukraine, the President is elected by a two-round system in which a candidate must win a majority (50\% or more) of all ballots cast. The first round of voting in 2004 election was held on $31$st October. Since no candidate had 50\% or more of the votes cast, a run-off ballot was held on 21st November between the two candidates with maximum number of votes, Viktor Yushchenko and Viktor Yanukovych. The run-off election was won by the latter according to the official results announced on 23rd November. The results of the second round were protested by the opposition with an allegation of illegal falsifications. Massive street protests in support of the opposition as well as the blockade and picketing of the government buildings (the so-called ``Orange Revolution'') with the demands to cancel the results of the elections went off in the country. The Supreme Court annulled the November runoff election and ordered the third round of election (a rerun of the second round) which took place on 26th December, 2004. This time, Yushchenko won the election with 52\% of the votes. The above information and other pertinent discussions are reported in a detailed manner in \cite{paniotto2004ukraine}, \cite{d2005last} and \cite{kuzio2005kuchma}.

Analogous to the previous application, here also we primarily use the exit poll data to reconfirm the presence of interference in the outcome of the second round of $2004$ Ukrainian Presidential Election. The exit poll was originally conducted by KIIS and the Razumkov Center with the organizational support of the Democratic Initiatives Fund, and was carried out nationwide by a secret ballot method. The details of the poll can be obtained from \cite{paniotto2004ukraine}. The sample consisted of 750 polling stations with about 28000 respondents and a response rate of 79\%. Hence, the size of the exit poll data is roughly 22120, which ensures high power for the test. 

On the other hand, the Venezuelan Recall Referendum (RR) of 15th August 2004 was a referendum to decide whether or not the then President Hugo Chávez should be removed from office. In order to activate the RR, on 28th November, 2003, signatures and fingerprints were collected in a four-day event organized by the Consejo Nacional Electoral (CNE), with witnesses from all political parties. The Organization of American States (OAS) sent a delegation chaired by its Secretary General to negotiate a solution. The Carter Center, led by President Jimmy Carter himself, played an important role in getting the government and the opposition to agree on a course of action. CNE was the official body in charge of the organization of the RR. The result of the referendum was not to recall Chávez (approximately 59\% voted against him). However, there have been allegations of fraud shortly after (\cite{prado20112004}, \cite{mccoy20062004}). Since the RR was seen by all parties as a pivotal event, several organizations set up schemes to collect exit poll data to address the allegations. The two exit polls considered here were conducted independently by S\'{u}mate, a nongovernmental organization, and Primero Justicia, a political party. Assuming that the two polls were disjoint, we combine them to get a sample of 36174 observations. These data are obtained from \cite{prado20112004}. 

For convenience, for both of the datasets, let us denote the candidate who has won the final election as $W$ and the corresponding losing candidate as $L$. As before, we consider a two candidate setup and rescale the data as needed. Critical value of the test is $\tau_c=0.5$, and the test is carried out at 5\% level of significance. The details of the exit poll data, final election results, and the outcome of the test procedure through $\mathscr{T}_1$ are displayed in \Cref{tab:ukrivene-election}.

\begin{table}[!ht]
\caption{Results for the electoral irregularity detection method for 2004 Ukrainian Presidential election and 2004 Venezuelan recall referendum.}
\label{tab:ukrivene-election}
\centering
 \begin{tabular}{l c c } 
 \hline
   & Ukraine & Venezuela  \\[0.5ex] 
 \hline
 Population Size & 29,315,980 & 9,789,637 \\ 
 Exit Poll Size & 22,120 & 36,174 \\
 Exit Poll (L) & 54.64\% & 60.63\%  \\ 
 Exit Poll (W) & 45.36\% & 39.37\%  \\ 
 Final Result (L) & 48.51\% & 40.64\% \\
 Final Result (W) & 51.49\% & 59.36\% \\
 $\mathscr{T}_1$ (test statistic) & $\approx 0$ & $\approx 0$ \\ 
 Decision & Reject & Reject \\ 
 \hline
\end{tabular}
\end{table}

We observe that the value of $\mathscr{T}_1$ is 0 for both Ukraine and Venezuela, indicating significant interference on both elections. This reaffirms the occurrence of major electoral irregularities in these two elections as has been discussed in the papers alluded to above. We also point out that the values of the other two statistics $\mathscr{T}_2$ and $\mathscr{T}_3$, which rely on the cost function based approach, match with that of $\mathscr{T}_1$ for $2004$ Ukrainian Presidential Election. In other words, it demonstrates the robustness of the proposed test procedure once again.

\section{Conclusion}
\label{sec:conclusions}

In summary, this paper provides a new method for leveraging the exit poll data to detect the occurrence of significant intervention in the outcome of a two-candidate democratic electoral system. We must note that not all intervention would lead to a change in the majority function of the voter model. Therefore, simple comparison of the final election result and the exit poll data is likely to render misleading conclusions. To that end, we work with a probabilistic voter model and develop a test procedure with a solid theoretical understanding. Through a detailed simulation study, we demonstrate the performance of the proposed test statistic under various settings. It establishes that when the size of the exit poll data is more than 20000, the test achieves high power. However, one should remember that  a key assumption of the test is that the exit poll is conducted in a scientific way where the sample is representative of the entire population. Thus, the test may not give good results if the exit poll data involves biased selection of people from the population. In such cases, or in the absence of any prior information, we also provide a cost function based approach that can lead to powerful tests as well. 

As real life applications, we consider three different examples. In the first one, our method detects the presence of significant electoral intervention in four states of USA in the $2016$ Presidential Election. The fact that recounting was done in all of these states substantiates our findings. It is also observed that for one of the states, the conclusion of the proposed test differs from that of a two sample proportion comparison. It clearly exhibits that our method is able to identify significant intervention that affects the overall outcome of a electoral process, thereby avoiding misleading inferences. Apart from that, we also detect evidence of significant electoral fraud in $2004$ Ukrainian Presidential election and $2004$ Venezuelan recall referendum, both of which are in line with existing knowledge. Overall, the results suggest that the procedure works well and thus can be used by news channels, political analysts and others to detect the presence of significant electoral intervention. 

We conclude this article with a couple of interesting future directions. Albeit one can detect the presence of irregularities on the overall outcome of the election (for example, on state-level), it does not provide additional idea of which sub-unit (for example, which county) is more likely to have been exposed to the irregularities. It would be interesting to work on a unified approach which leverages individual county or district level data to detect if the result of the state has been significantly altered (note that the majority is decided based on the data from the entire state) and if so, which counties or districts might have caused that. Another possible extension to the current work is to consider the case of multiple candidates instead of a binary voter model. It would allow us to study various elections in several other countries where there are more than two major political parties competing  against each other. 





\section*{Funding details}

There are no funding bodies to thank relating to the creation of this article.

\section*{Declaration of interest} 

The authors declare no conflict of interest.


\section*{Data availability statement} 

All of the data used in this article are publicly available. Data related to the 2016 USA Presidential Election are obtained from Harvard Dataverse (link: \url{https://dataverse.harvard.edu/file.xhtml?fileId=4788675}) and from Data world (link: \url{https://data.world/databeats/2016-us-presidential-election}). Data related to the other applications are obtained from \cite{paniotto2004ukraine} and \cite{prado20112004}.



\section{Proofs}
\label{sec:proofs}

\begin{proof}[Proof of \Cref{thm:intervention}] For the first case, when $\alpha^2+1 \geqslant \alpha \beta$ and $\beta^2+1 \geqslant \alpha \beta$, by \Cref{lem:intervention-vote}, since none of the opinion vectors change on being acted upon by the intervention vector $v$, it is easy to argue that $\P(m(\vec{X})=m(\vec{X'}))=1$. 

Now, let us focus on the second case. With $\ind\{\cdot\}$ denoting the indicator function, we have the relation $\ind\{m(\vec{X})=m(\vec{X'})\} = (1+m(\vec{X})m(\vec{X'}))/2$, which subsequently implies the following:
\begin{equation}
    \label{eq:m=m'}
    \P\paren{m\paren{\vec{X}}=m\paren{\vec{X'}}}=\frac{1}{2}\paren{1+\E\paren{m\paren{\vec{X}}m\paren{\vec{X'}}}}.
\end{equation}

Note that $\E(X_1)=2p_0-1$, $\E(X_1')=2p'-1$, $\var(X_1)=4p_0-4p_0^2$, $\var(X_1')=4p'-4p'^2$, and $\cov(X_1,X_1')=4p'-4p_0p'$. Next, for $i=1,\hdots,n$, letting $Y_i = (X_i,X_i')^T$, we can write
\begin{equation}
    \label{eq:variance_yi_2}
    \E(Y_i)=\begin{pmatrix} 2p_0-1 \\ 2p'-1 \end{pmatrix}=\mu_2, \; \var\paren{Y_i}=\begin{bmatrix}4p_0-4p_0^2 & 4p'-4p_0p' \\
    4p'-4p_0p' & 4p'-4p'^2
    \end{bmatrix}=\Sigma_2. 
\end{equation}

Also, let $\bar{Y}_n = \sum_{i=1}^n Y_i/n = (\bar{Y}_{n1},\bar{Y}_{n2})$. An application of the multivariate central limit theorem implies that, as $n\to\infty$,
\begin{equation}
    \label{eq:clt_2}
    \sqrt{n}\paren{\bar{Y}_n-\mu_2} \convD \gauss_2\paren{0,\Sigma_2}.
\end{equation}

In other words, for large $n$, $(\bar{Y}_{n1},\bar{Y}_{n2})$ is approximately distributed as $\gauss_2(\mu_2,\Sigma_2/n)$. On the other hand, $\E(m(\vec{X})m(\vec{X'}))$ can be evaluated as $\E[\sgn(\bar{Y}_{n1})\sgn(\bar{Y}_{n2})]=2\P[\sgn(\bar{Y}_{n1}) = \sgn(\bar{Y}_{n2})]-1$. Combining it with \cref{eq:m=m'}, straightforward calculation leads to the following:
\begin{equation}
    \label{eq:equality_of_majority_2}
    \P\paren{m\paren{\vec{X}}=m\paren{\vec{X'}}} = \P\paren{\sgn(\bar{Y}_{n1}) = \sgn(\bar{Y}_{n2})} = f\paren{\mu_2,\Sigma_2,n},
\end{equation}
where $f(\mu_2,\Sigma_2,n)$ is the probability that the two components of a $\gauss_2(\mu_2,\Sigma_2/n)$ distribution are of same sign. This completes the discussion for the second case. 

For the third part of the theorem, defining $Y_i$ similarly as before, we can obtain
\begin{equation}
    \label{eq:variance_yi_3}
    \E(Y_i)=\begin{pmatrix} 2p_0-1 \\ 2p'-1 \end{pmatrix}=\mu_3, \; \var(Y_i)=\begin{bmatrix}4p_0-4p_0^2 & 4p_0-4p_0p' \\
    4p_0-4p_0p' & 4p'-4p'^2
    \end{bmatrix}=\Sigma_3. 
\end{equation}

Then, the rest of the proof follows in an identical fashion as in the second case.
\end{proof}

\begin{proof}[Proof of \Cref{thm:confidence-interval}]
We have the following distributional convergence for $\hat{p}'$: \begin{equation}
    \label{eq:convergence_p'}
      \frac{\sqrt{n}\paren{\hat{p}'-p'}}{\sqrt{p'\paren{1-p'}}} \convD \gauss\paren{0,1} .
\end{equation}

We know that $\hat{p}' \convP p' \text{ as } n\to\infty$ from the Weak Law of Large Numbers (WLLN). Using Slutsky's theorem, we get the following convergence equation:
\begin{equation}
    \label{eq:final_convergence_p'}
    \frac{\sqrt{n}\paren{\hat{p}'-p'}}{\sqrt{\hat{p}'\paren{1-\hat{p}'}}} \convD \gauss\paren{0,1},
\end{equation}
which implies
\begin{equation}
    \label{eq:CI_p'}
    \P\paren{p' \in S_1\paren{\hat{p}'}}=1-\beta.
\end{equation}

From \Cref{asm-originalprop}, we have the following,
\begin{equation}
\label{eq:assumption analysis}
\nonumber
    \begin{split}
        \int_{0}^{1-p'}\paren{\frac{\pi_0-\pi}{1-\pi}}h\paren{\pi}d\pi=0 & \implies \int_{0}^{1-p'}\paren{1-\frac{1-\pi_0}{1-\pi}}h\paren{\pi}d\pi=0 \\
        & \implies p_0\int_{0}^{1-p'}h\paren{\pi}d\pi=\int_{0}^{1-p'}\paren{\frac{p_0(1-\pi_0)}{1-\pi}}h\paren{\pi}d\pi \\
        & \implies p_0=\frac{1}{H\paren{1-{p}'}}\int_{0}^{1-{p}'}\frac{{p}'}{1-\pi}h\paren{\pi}d\pi.
    \end{split}
\end{equation}

In the above deduction, we have used the fact that $p_0(1-\pi_0)=p'$. Thus, \Cref{asm-originalprop} implies that $p_0=\phi(p')$ i.e.\ $p_0$ matches with the expected value of $p$ when the final proportion of voters voting for the first candidate is ${p}'$. It can be easily shown that $\phi(.)$ is a continuously differentiable function. Applying Delta Theorem on \eqref{eq:convergence_p'}, we get the following,
\begin{equation}
    \label{eq:convergence_P(p')}
       \frac{\sqrt{n}\paren{\phi\paren{\hat{p}'}-\phi\paren{p'}}}{\sqrt{p'\paren{1-p'}\abs{\phi'\paren{p'}}}} \convD \gauss\paren{0,1} .
\end{equation}

We also have the following convergence in probability, 
\begin{equation}
    \label{eq:convergence_P(p')_prob}
    \frac{\sqrt{p'\paren{1-p'}}\abs{\phi'\paren{{p'}}}}{\sqrt{\hat{p}'\paren{1-\hat{p}'}}\abs{\phi'\paren{\hat{p}'}}} \convP 1  .
\end{equation}

Multiplying \eqref{eq:convergence_P(p')} and \eqref{eq:convergence_P(p')_prob} by Slutsky's theorem, we get,
\begin{equation}
    \label{eq:final_convergence_P(p')}
    \frac{\sqrt{n}\paren{\phi\paren{\hat{p}'}-\phi\paren{p'}}}{\sqrt{\hat{p}'\paren{1-\hat{p}'}\abs{\phi'\paren{\hat{p}'}}}} \convD \gauss\paren{0,1}. 
\end{equation}

Hence, from \eqref{eq:final_convergence_P(p')}, 
\begin{equation}
    \label{eq:CI_P(p')}
    \P\paren{\phi\paren{p'} \in S_2\paren{\hat{p}'}}=1-\beta .
\end{equation}

Next, because of the choice of $\beta$, we get the following equation:
\begin{equation}
    \label{eq:CI_P(p')_CI_p'}
    \P\paren{p' \in S_1\paren{\hat{p}'}, \phi\paren{p'} \in S_2\paren{\hat{p}'}} \geqslant \paren{1-\beta}^2=1-\alpha .
\end{equation}

Let $m,M$ be defined as in \eqref{eq:defn_m1_M1}. Using \Cref{asm-originalprop} and the above, 
\begin{equation}
    \label{eq:main CI}
    \nonumber
    \begin{split}
        \P\paren{m \leqslant \eta\paren{p_0,p',n} \leqslant M} & =  \P\paren{m \leqslant \eta\paren{\phi\paren{p'},p',n} \leqslant M} \\
    & \geqslant \P\paren{p' \in S_1\paren{\hat{p}'}, \phi\paren{p'} \in S_2\paren{\hat{p}'}} \\
    & \geqslant 1-\alpha .
    \end{split}
\end{equation}

Thus, $(m,M)$ is a $100(1-\alpha) \%$ confidence interval for $\eta(p_0,p',n)$ and hence for $f(\mu_2,\Sigma_2,n)$. This completes the proof of the confidence interval part of the theorem. 

To prove the consistency of the test, consider the following definitions of $\Theta_0^n$ and $\Theta_1^n$:  
\begin{equation}
    \Theta_0^n=\left\{p_0,p'|\eta(p_0,p',n) \geqslant \tau_c \right\},   \qquad      \Theta_1^n=\left\{p_0,p'|\eta(p_0,p',n) < \tau_c \right\}.
\end{equation}

The region $\Theta_0^n$ corresponds to the null hypothesis $H_0^n$ and the region $\Theta_1^n$ corresponds to the alternate hypothesis $H_1^n$. Electoral intervention is termed as ``significant'' if there is a high probability $\paren{>\tau_c}$ of the majority being changed on performing the intervention. 

We know that if any value lies outside the confidence interval mentioned in \Cref{thm:confidence-interval}, that value is rejected at the level of significance $\alpha$. Keeping this in mind, we define our test statistic to be $M$, as defined in the confidence interval part of \Cref{thm:confidence-interval}. We shall reject $H_0^n$ if $M < \tau_c$. Let us calculate the type-$1$ error of this test. Suppose, $\paren{p_0,p'} \in \Theta_0^n$ i.e.\ $\eta(p_0,p',n) \geqslant \tau_c$. Hence, we have, 
\begin{equation}
\P_{p_0,p'}\paren{H_0^n\text{ is rejected}}=\P_{p_0,p'}\paren{M < \tau_c}=\P_{p_0,p'}\paren{\eta(p_0,p',n) \notin \paren{m,M}}\leqslant \alpha.  
\end{equation} 

Since this is true for all $(p_0,p') \in \Theta_0^n$, we can say the following:
\begin{equation}
  \label{eq:type-1}
     \sup_{\paren{p_0,p'} \in \Theta_0^n}\P_{p_0,p'}\paren{H_0^n\text{ is rejected}} \leqslant \alpha
\end{equation}
 
Thus, we have shown that the maximum type-$1$ error is bounded by $\alpha$. 

Note that the theoretical probability is given by $\P(m(\vec{X})=m(\vec{X'}))$ = $f(\mu_2,\Sigma_2,n)$, and as $n \to\infty$, $f(\mu_2,\Sigma_2,n) \to f(\mu_2,\Sigma_2,\infty)$. Now, $f(\mu_2,\Sigma_2,\infty)=\P(\sgn(P)=\sgn(Q))$ where $(P,Q) \sim \gauss_2(\mu_2,0)$. In other words, $P=\mu_{2,1}$ and $Q=\mu_{2,2}$ almost surely. Since $\mu_2=(2p_0-1,2p'-1)$, $P=2p_0-1$ and $Q=2p'-1$ almost surely. Thus, we conclude that, $\eta(p_0,p',n) \to 1$ as $n \to \infty$ if $(2p_0-1)(2p'-1)>0$ and $\eta(p_0,p',n) \to 0$ as $n \to \infty$ if $(2p_0-1)(2p'-1)<0$. It implies that 
\begin{equation}
    \Theta_0^\infty=\left\{p_0,p'|(2p_0-1)(2p'-1)>0\right\}, \; \Theta_1^\infty=\left\{p_0,p'|(2p_0-1)(2p'-1)<0\right\}.
\end{equation} 

We have already discussed that as $n\to\infty$, $\hat{p}' \convP p'$ and $\phi(\hat{p}') \convP  p$. Hence, as $n\to\infty$,
\begin{equation}
    \label{eq:set convergence}
    S_1\paren{\hat{p}'} \to \left\{p'\right\}, \;
    S_2\paren{\hat{p}'} \to \left\{p_0\right\}.
\end{equation}

Thus, in this case, $m=M=\eta(p_0,p',\infty)$. Under $\Theta_0^\infty$, $(2p_0-1)(2p'-1)>0$ and hence, $M=1$. Under $\Theta_1^\infty$, $(2p_0-1)(2p'-1)<0$ and hence, $M=0 \leqslant \theta$. Clearly, under $\Theta_1^\infty$, the test rejects $\Theta_0^\infty$ with probability 1, and that proves the consistency of the test. 
\end{proof}

\begin{proof}[Proof of \Cref{thm:exit poll theorem}]
For large $n$, $k\hat{p}_k \sim \bin(k,p_0)$. A straightforward application of central limit theorem suggests
\begin{equation}
    \label{eq:convergence_pj}
      \frac{\sqrt{n}\paren{\hat{p}_k-p_0}}{\sqrt{p_0\paren{1-p_0}}} \convD \gauss\paren{0,1} .
\end{equation}

From the WLLN, we know that $\hat{p}_k \convP p_0$ as $k\to\infty$. Using Slutsky's theorem, 
\begin{equation}
    \label{eq:final_convergence_pj}
    \frac{\sqrt{n}\paren{\hat{p}_k-p_0}}{\sqrt{\hat{p}_k\paren{1-\hat{p}_k}}} \convD \gauss\paren{0,1} .
\end{equation}

Subsequently, we have the following:
\begin{equation}
    \label{eq:CI_pj}
    \P\paren{p_0 \in C_2({\hat{p}_k})}=1-\beta .
\end{equation}

Thereafter, one can mimic the steps as in \Cref{sec:final data available} to show that $\P_{H_0}(\mathscr{T}_1<\tau_c) \leqslant \alpha$. 

Next, for showing that the test is consistent, note that $k \to n,\; n \to \infty$ implies that $k \to \infty$ and $\hat{p}_k\convP p_0$, and in that case, from the WLLN, $C_2(\hat{p}_k) \to  \left\{p_0\right\}$. Then, one can adopt identical steps as in the proof of the consistency part in \Cref{thm:confidence-interval} and obtain the required result.
\end{proof}




\end{document}